\newtheorem{Lemma}{Lemma}
\newcommand{\be}{\begin{equation}}
\newcommand{\ee}{\end{equation}}
\newcommand{\ber}{\begin{eqnarray}}
\newcommand{\eer}{\end{eqnarray}}
\newcommand{\curl}{{\rm curl}\,}
\newcommand{\di}{{\rm div}\,}
\begin{document}

\title{Localizing Energy in Fierz-Lanczos theory}
\author{
Jacek Jezierski\thanks{E-mail: Jacek.Jezierski@fuw.edu.pl} \, and Marian Wiatr\thanks{E-mail: mwiatr@fuw.edu.pl}\\
Department of Mathematical Methods in Physics, \\
Faculty of Physics, University of Warsaw,\\
ul. Pasteura 5, 02-093 Warsaw, Poland \\[1ex]
Jerzy Kijowski\thanks{E-mail: kijowski@cft.edu.pl}\\
Center for Theoretical Physics,\\
Polish Academy of Sciences, \\
Al. Lotników 32/46, 02-668 Warsaw, Poland}

\maketitle

\begin{abstract}
We calculate energy carried by the massless spin-2 field using Fierz-Lanczos representation of the theory. For this purpose Hamiltonian formulation of the field dynamic is thoroughly analyzed. Final expression for the energy is very much analogous to the Maxwell energy in electrodynamics (spin-1 field) and displays the locality property. Known as a
``super-energy'' in gravity theory, this quantity differs considerably from the well understood gravitational field energy (represented in linear gravity by the quadratic term in Taylor expansion of the A.D.M. mass) which cannot be localized.
\end{abstract}

\section{Introduction. Fierz-Lanczos field equations}

Linear gravity is a gauge-type field theory. The spacetime metric is split into a fixed ``background metric'' $g_{\mu\nu}$ and a ``small perturbation'' $h_{\mu\nu}$ playing a role of the configuration variable and admitting gauge transformations:
\begin{equation}\label{gauge}
    h_{\mu\nu} \longrightarrow h_{\mu\nu} + \pounds_\xi g_{\mu\nu} \, ,
\end{equation}
where the Lie derivative with respect to the vector field $\xi$ describes an ``infinitesimal coordinate transformation'' $x^\mu \rightarrow x^\mu + \xi^\mu(x)$. Linearized Einstein equations are second order differential equations imposed on the metric variable $h_{\mu\nu}$.

A substantial, technical simplification of the theory is obtained if we formulate it in terms of gauge-invariants. In case of the flat Minkowski background, an elegant gauge-invariant formulation is obtained in terms of components of the (linearized) Weyl tensor $W_{\lambda\mu\nu\kappa}$, i.e. the traceless part of the (linearized) curvature tensor
\begin{eqnarray}
 \label{Riemann}
  R^\lambda_{\kappa\mu\nu} &=& \nabla_\mu \Gamma^\lambda_{\kappa\nu} - \nabla_\nu \Gamma^\lambda_{\kappa\mu} \, ,
\end{eqnarray}
where $\Gamma$ represents the (linearized) connection coefficients of the total metric $g+h$:
\begin{equation}\label{Christofells}
    \Gamma^\lambda_{\mu\nu} = \frac 12 g^{\lambda\kappa} \left( h_{\kappa\mu ; \nu} + h_{\kappa\nu ; \mu} - h_{\mu\nu ; \kappa} \right) \, ,
\end{equation}
whereas both ``$\nabla$'' and ``;'' denote covariant derivative with respect to the background geometry $g$ (see e.g.
\cite{JJspin2}, \cite{JJnullweyl} and \cite{CYKCQG19}).

Due to metricity condition (\ref{Christofells}), Riemann tensor satisfies the following identities\footnote{Note that for tensors fulfilling \eqref{sym-asym}, identity \eqref{R-total-antisymm} is equivalent to first-type Bianchi identity:
\begin{equation}\label{Bianchi}
    R_{ \lambda[ \mu\nu\kappa ]} = 0 \, .
\end{equation}
}:
\begin{eqnarray}
  R_{\lambda\mu\nu\kappa} &=& - R_{\mu\lambda\nu\kappa} = - R_{\lambda\mu\kappa\nu} = R_{\nu\kappa\lambda\mu} \, , \label{R-sym-asym}\\
  R_{[ \lambda\mu\nu\kappa ]} &=& 0 \, .  \label{R-total-antisymm}
\end{eqnarray}
First identity leaves 21 independent components, so the Riemann tensor has 20 independent components. Half of them is carried by the Ricci tensor
\begin{equation}\label{Ricci}
    R_{\mu\nu} := R^\lambda_{\mu\lambda\nu}  \, ,
\end{equation}
which is symmetric (again -- due to metricity of the connection). Hence, the traceless part of the Riemann tensor:
\begin{equation}\label{Weyl}
    W_{\lambda\kappa\mu\nu} = R_{\lambda\kappa\mu\nu} - \frac 12 \left( g_{\lambda\mu} R_{\kappa\nu}
  - g_{\lambda\nu} R_{\kappa\mu} +  g_{\kappa\nu} R_{\lambda\mu} -  g_{\kappa\mu} R_{\lambda\nu}
  \right)  + \frac 16 R \left( g_{\lambda\mu} g_{\kappa\nu}
  - g_{\lambda\nu} g_{\kappa\mu}
  \right)\, ,
\end{equation}
called Weyl tensor, has 10 independent components. The complete list of its identities is:
\begin{eqnarray}
  W_{\lambda\mu\nu\kappa} &=& - W_{\mu\lambda\nu\kappa} = - W_{\lambda\mu\kappa\nu} = W_{\nu\kappa\lambda\mu} \, , \label{sym-asym}\\
  W_{[ \lambda\mu\nu\kappa ]} &=& 0 \, ,   \label{total-antisymm} \\
  W^\lambda_{\ \mu\lambda\kappa} &=& 0 \, . \label{slad-W=0}
\end{eqnarray}

It can be proved that the gauge-invariant content of linearized Einstein equations is equivalent to the ``contracted 2-nd type Bianchi'':
\begin{equation}\label{Bianchi-2}
    \nabla_\lambda W^{ \lambda \mu\nu\kappa } = 0 \, .
\end{equation}
In particular, the existence of the metric field $h_{\mu\nu}$, such that all the quantities arising here can be obtained by its appropriate differentiation, is guaranteed\footnote{More precisely, gauge-invariant part of vacuum Einstein metric $h$ is equivalent to spin-2 field $W$, see \cite{JJspin2} and Theorem 1 (formulae 2.15) in \cite{JJnullweyl}. However, one has to remember that the operator $h \mapsto W[h]$ has non-trivial kernel which includes `cosmological solutions'. Typical example (in spherical coordinates) is $h=r^2(dt^2+dr^2)$ which corresponds to linearized de Sitter metric. It gives $W[h]=0$ but its (linearized) Ricci is not vanishing.}  by (\ref{Bianchi-2}).

Spin-two-particle quantum mechanics can also be formulated in a similar language (cf.~\cite{Fierz}). Originally, the particle's ``wave function'' is described by the totally symmetric, fourth order spin-tensor. However, there is a one-to-one correspondence between such spin-tensors and tensors $W_{\lambda\mu\nu\kappa}$ satisfying identities (\ref{sym-asym}--\ref{slad-W=0})
(the transformation between the two pictures can, e.g., be found in \cite{Taub}). Moreover, evolution of a massless particle is governed by the same field equation (\ref{Bianchi-2}). In this representation, the theory is often referred to as the Fierz-Lanczos theory. Here, identities  (\ref{sym-asym}--\ref{slad-W=0}) are not treated as a consequence of any ``metricity'' (there is {\em a priori} no metric here!) but are a straightforward  consequence of the transformation from the spinorial to the tensorial language.

Fierz-Lanczos theory can also be derived from a variational principle and the corresponding ``potentials'' are known as Lanczos potentials \cite{Lanczos}--\cite{Baykal}. In the present paper we propose a substantial simplification of this theory on both the Lagrangian and the Hamiltonian levels. Finally, we calculate the field energy equal to the value of the field Hamiltonian and prove its local character. This means that if the region $V= V_1 \cup V_2$ is a union of two disjoint regions $V_1$ and $V_2$ then the corresponding field energies sum up:
\begin{equation}\label{Esum}
    E_V = E_{V_1} + E_{V_2} \, .
\end{equation}
Our main result is: the energy of the Fierz-Lanczos field\footnote{known in gravity theory as one of the so called ``super-energies''} is entirely different from the well understood (A.D.M.)-energy of the gravitational field. Linear expansion of the field dynamics in a neighbourhood of the background metric $g_{\mu\nu}$ corresponds to the quadratic expansion of the A.D.M. energy (``mass'') which has been calculated by Brill and Deser (see \cite{BrillDeser}).
Anticipating results which will be presented in the next paper, let us mention that gravitational energy cannot be localized: identity \eqref{Esum} cannot be valid in gravity theory because the gravitational interaction energy between the two energies (masses) has to be taken into account on the right-hand-side\footnote{This observation does not contradict the so called {\em quasi-localization} of gravitational energy.}.

We conclude that linear gravity and the Fierz-Lanczos theory differ considerably. They can be described by the same field $W$ and the same field equations (\ref{Bianchi-2}), but the corresponding phase spaces carry entirely different canonical (symplectic) structures. Consequently, energy carried by the field is entirely different in both theories. Graviton is not a simple ``massless spin-two particle''.

\section{Fierz-Lanczos field theory in (3+1)-formulation}

Quantum mechanics of a spin-two particle can be written either in the spinor or in the tensor language. The relation between the two equivalent formalisms can be found e.g. in the Taub paper \cite{Taub}. Here, we shall use the tensor formalism. This means that the field configuration is described by the ``Weyl-like'' tensor fulfilling identities (\ref{sym-asym}--\ref{slad-W=0}) typical for the  Weyl tensor of a metric connection.

In what follows we describe properties of the theory on a flat four-dimensional Minkowski space (signature $(-,+,+,+)$) whose metric $g_{\mu\nu}$ is used to rise and lower tensor indices.

Weyl-like tensor $W$ can be nicely described in a $(3+1)$-decomposition. Denoting by $t=x^0$ the time variable and by $(x^k), k=1,2,3$,  the remaining space variables\footnote{Here,  we use Lorentzian linear coordinates. Similarly as in Maxwell electrodynamics, generalization to curvilinear coordinates is obvious.}, 10 independent components of $W$ are uniquely described by two three-dimensional symmetric, traceless tensors (cf.~\cite{Cartin},\cite{CGQ32jjsm}):
\begin{equation}\label{D-oraz-H}
D^{kl} =  W^{0k0l} \, , \quad
 \quad {B}^{ji} = \frac 12 \varepsilon^{jkl}  W^{0i}{_{kl
 }}
\, .
\end{equation}
Trace $D^{ij}g_{ij}$ vanishes due to identity \eqref{slad-W=0}, whereas \eqref{Bianchi} implies vanishing of $B^{ij}g_{ij}$. Antisymmetric part of $B$ is given by $W^{0k}{_{kl}}$, so it vanishes because Weyl tensor is traceless.
In Cartesian coordinates components of the tensor density $\epsilon^{jkl}=\sqrt{\det \eta_{mn}}\,\varepsilon^{jkl}$ are equal to the corresponding  components of  the Levi-Civita tensor $\varepsilon^{jkl}= \epsilon^{jkl}/\sqrt{\det \eta_{kl}}$ because $\det \eta_{kl} =1$.

Field equations $\nabla_\lambda W^{\lambda \mu\nu\kappa}=0$ can be written in a way similar to Maxwell electrodynamics:
\begin{eqnarray}
  {\rm div} D  &=& 0 \, , \label{dD}\\
  {\rm div} B &=& 0 \, , \label{dB}  \\
  \dot{D} &=& {\rm curl}\, B \, , \label{rB}\\
  \dot{B} &=&  - {\rm curl} \, D  \label{rD} \, .
\end{eqnarray}
where ``dot'' denotes the time derivative $\partial_0$. Moreover, the following differential operators of rank 1, acting on symmetric, traceless tensor fields $K^{ij}$ have been introduced:
\begin{eqnarray}
  \left({\rm div} K\right)_l &=& \nabla_k K^k_{\ l} \, , \label{div} \\
  \left( {\rm curl}\, K \right)_{ij} &=& \frac 12 \left( \varepsilon_i^{\ kl}\nabla_k K_{l j} + \varepsilon_j^{\ kl}\nabla_k K_{l i} \right)
  = \nabla_k K_{l (j} \varepsilon_{i)}^{\ kl} \label{rot}\, .
\end{eqnarray}
It is obvious that ${\rm curl} \, K$ is also a symmetric, traceless tensor.

For {\em transverse-traceless} tensors $D$ i $B$ (i.e. fulfilling constrains  (\ref{dD}--\ref{dB})), symmetrization in formula \eqref{rot} is not necessary because the antisymmetric part of $\varepsilon_i^{\ kl}\nabla_k K_{l j}$ vanishes:
\begin{equation}\label{div-curl}
    \varepsilon^{nij} \varepsilon_i^{\ kl}\nabla_k K_{l j}  = \varepsilon^{ijn} \varepsilon_i^{\ kl}\nabla_k K_{l j} =
    \left( g^{jk} g^{nl} - g^{jl} g^{nk} \right)\nabla_k K_{l j} = \nabla_k K^{nk} - \nabla^n K_j^{\ j}= 0 \, .
\end{equation}

\section{A simple variational principle (not obeying Lorentz-invariance)}\label{simple}

Similarly as in electrodynamics, field equations (\ref{dD}--\ref{rD}) can be derived from a variational principle. For this purpose we use the following simple observation (see Appendix for an easy proof):

\noindent
{\bf Lemma:} Given a symmetric, transverse-traceless field $B$ on a 3D-Euclidean space (i.e.~the Cauchy surface $\{ t = 0 \}$), there is a symmetric, transverse-traceless field $p$ such that
\begin{equation}\label{B=rotp}
    B = {\rm curl} \ p \, .
\end{equation}
The field $p$ is unique up to second derivatives $\partial_i \partial_j \varphi$ of a harmonic function: $\Delta \varphi =0$.

\noindent
{\bf Corollary:} Given field configuration $(D,B)$ satisfying field equations (\ref{dD}--\ref{rD}) on Minkowski spacetime $M$, there is a symmetric, transverse-traceless field $p$ on each Cauchy hypersurface $\{  t = {\rm const.} \}$ which fulfills not only (\ref{B=rotp}) but, moreover,
\begin{equation}\label{DpTS1}
    D = - \dot{p}  \, .
\end{equation}
The field $p$ satisfies wave equation
\begin{equation}\label{wave}
    \ddot{p} = \Delta p \, .
\end{equation}

\noindent
\begin{proof}
At each hypersurface $\{ t = {\rm const.} \}$ choose any $\widetilde{p}$ satisfying (\ref{B=rotp}). Due to field equations we have:
\[
    {\rm curl} \left( D + \dot{\widetilde{p}} \right) = {\rm curl} D + \dot{B} = 0 \, .
\]
Hence, at each instant of time $(D+\dot{\widetilde{p}})$ differs from zero by $\partial_i \partial_j \varphi$, where $\Delta \varphi =0$.
Integrating with respect to time, we can find $\alpha$ such that $\dot{\alpha} = \varphi$ and $\Delta \alpha =0$. Whence:
\[
    D + \dot{\widetilde{p}} = \partial_i \partial_j \dot{\alpha} \, .
\]
We conclude that
\begin{equation}\label{tildep}
    p := \widetilde{p}  - \partial_i \partial_j \alpha
\end{equation}
fulfills (\ref{DpTS1}). Taking into account that ${\rm curl}\, {\rm curl} = - \Delta$ on symmetric, transverse-traceless fields, we obtain:
\[
    \ddot{p} = - \dot{D} =  - {\rm curl}\, B =  - {\rm curl} \  {\rm curl} \, p = \Delta p \, . 
\] \end{proof}

\noindent
{\bf Remark:} The object $p$ is analogous to the vector potential $A_k$ in electrodynamics. Condition ${\rm div}\, p = 0$ plays a role of the Coulomb gauge. Condition (\ref{DpTS1}) plays a role of the additional axial gauge $A_0 = 0$, which can always be imposed on the Coulomb gauge.

Similarly as in electrodynamics, we can assume that the first pair of ``Maxwell equations'' is satisfied {\em a priori} and derive the remaining equations from a variational principle. For this purpose we treat $p$ as a field potential, equations (\ref{B=rotp}) and (\ref{DpTS1}) as definition of $D$ and $B$, and take the following Lagrangian function\footnote{A constant $\alpha$ is necessary because, contrary to the case of electrodynamics, the quantity $\frac{D^2 - B^2}2$ does not carry correct physical units. Actually, $\alpha$ must be calculated in $\ell^2$-units -- just an inverse to the cosmological constant units. The physically correct value of $\alpha$ can be measured if we know how the field $W$ interacts with any realistic field theory. Of course, the dimensional constant $\alpha$ could also be integrated {\em a priori} into definition of the fields $D$ and $B$, but then $W$ would not have the correct dimension of the curvature.}:
\begin{equation}\label{varia-p}
    {\cal L} (p, \dot{p}) := \alpha \cdot \frac {D^2 - B^2}2 \, .
\end{equation}

Indeed, we have:
\begin{equation}\label{dL-curl}
    \delta \int {\cal L} = \alpha \int \left( \dot{p} \delta \dot{p} -  ({\rm curl}\, p) \delta ({\rm curl}\, p) \right) = \alpha \int \left( -D \delta \dot{p} + \Delta p \delta p \right) \, ,
\end{equation}
which implies (\ref{wave}) as the Euler-Lagrange equation for $L$. Moreover, quantity $-\alpha D = \frac{\partial {\cal L}}{\partial \dot{p}}$ plays a role of the momentum canonically conjugate to $p$. To simplify notation, we shall skip the constant $\alpha$ in what follows (e.g., using appropriate physical units in which $\alpha = 1$).

Formula (\ref{varia-p}) implies the following Hamiltonian density of the field:
\begin{equation}\label{Hamilt}
    {\cal H} := (- D) \dot{p} - {\cal L} = D^2 - \frac {D^2 - B^2}2 = \frac {D^2 + B^2}2 \, ,
\end{equation}
which generates the Hamiltonian field dynamics
\[
    -\dot{p} =  \frac{\delta {\cal H}}{\partial (-D)}  \ \ \ ; \ \ \  - \dot{D} = \frac{\delta {\cal H}}{\partial p}
\] according to:
\begin{equation}\label{H-gen}
    \delta {\cal H} = D \delta D + B \delta ({\rm curl}\, p) = - \dot{p} \delta (- D) - ({\rm curl}\, B) \delta p + \{ {\rm boundary \ \ terms} \}
\end{equation}

\noindent
{\bf Remark:} Quantity
\begin{equation}\label{E_V}
    E_V := \int_V {\cal H}
\end{equation}
may be identified with amount of the field energy contained in $V$, provided the boundary term vanishes when integrating \eqref{H-gen} over $\partial V$. For this purpose appropriate boundary conditions have to be imposed (cf. \cite{KIJ}). Physically, control of boundary data ensures adiabatical insulation of the interior of $V$ from its exterior. From the functional-analytic point of view boundary conditions are necessary for the self-adjointness of the evolution operator (the Laplacian $\Delta$ in our case) which guarantees the existence and uniqueness of the Cauchy problem\footnote{The issue of energy localization will be thoroughly discussed in the next paper. Here, we limit ourselves to discussion of the strongest possible boundary conditions: all the fields vanish in a neighbourhood of the boundary $\partial V$. This condition annihilates all the surface integrals arising during integration by parts. Consequently, the Laplacian operator $\Delta$ arising here is a symmetric operator. In order to have field evolution correctly defined, its appropriate self-adjoint extension has to be defined. For this purpose, correct boundary conditions are necessary. In case of the
{\em total} field energy (i.e. when $V = \mathbb{R}^3$), boundary terms vanish due to the sufficiently fast fall-of behaviour of the field. Anticipating those results let us mention that, similarly to electrodynamics, the spin-two-particle theory admits the energy localization and the quantity (\ref{Hamilt}) is a correct {\em local} energy density, whereas linear gravity {\em does not} admit localization of energy.}  within $V$.

Hamiltonian description of the field evolution leads, therefore, to the phase space of initial data  parameterized by the configuration $p$ and the canonical momentum $-D$. This means that the space carries the following symplectic structure:
\begin{equation}\label{Omega_pD}
    \Omega = \int_V \delta p \wedge \delta D \, ,
\end{equation}
and the Hamiltonian \eqref{E_V} generates field dynamics \eqref{DpTS1} -- \eqref{wave}.

Being correct from the Hamiltonian point of view, above Lagrangian version of the theory is not satisfactory because it is not relativistic invariant. Indeed, field equations (\ref{Bianchi-2}) are relativistically invariant. Lorentz transformations of $W_{\lambda\kappa\mu\nu}$ uniquely imply transformation laws for $D$ and $B$. But, like in electrodynamics, transformation law for the ``Coulomb-gauged'' potential $p$ is not only non-relativistic but obviously non-local. In electrodynamics, Lorentz transformations can be  applied correctly to the four-potential $A_\mu$. They mix different gauges. Here, one could relax the Coulomb gauge ${\rm div} \, p = 0$ by adding a ``symmetric-traceless part of a gradient'', namely:
\begin{equation}\label{TS}
    TS(\nabla b)_{ij} := \frac12(\partial_i b_j + \partial j b_i) - \frac13 g_{ij} \partial_k b^k \, ,
\end{equation}
where $b$ is a three-vector field. This would be an analog of the ``gradient gauge'' $\partial_k \varphi$ in electrodynamics which can be added to $A_k$ without changing the field $B$. If, moreover, we add $\dot{\varphi}$ to $A_0$, also the field $D$ does not change. Unfortunately, here only divergence-free fields $\partial_k b^k =0$ can be used in (\ref{TS}) if we want to keep equation ${\rm curl}\, p = B$. Such a non-relativistic condition does not allow us to organize both $p$ and $b$ into a single, local, fully relativistic object.

The unique remedy for this disease which exists in the literature is the use of the so called Lanczos potentials, i.e. further relaxation of  (\ref{B=rotp}) and (\ref{DpTS1}).

\section{Lanczos potentials and the relativistic invariant variational principle}

Since Weyl tensor is obtained by differentiating connection coefficients $\Gamma^\lambda_{\mu\nu}$, they are natural candidates for potentials describing Lanczos field. But -- contrary to linear gravity -- there is  {\em a priori} no metric $h$ here. Hence, what we obtain by this procedure from a generic connection:
\begin{equation}\label{Riem-down}
    R_{\lambda\kappa\mu\nu} = - \Gamma_{\lambda\kappa\mu ; \nu} + \Gamma_{\lambda\kappa\nu ; \mu}
\end{equation}
does not satisfy symmetry conditions (\ref{sym-asym}) (to simplify further considerations we have lowered first index of the connection: $\Gamma_{\lambda\mu\nu} = g_{\lambda\sigma} \Gamma^\sigma{_{\mu\nu}}$). To produce Lanczos field we must use appropriate symmetrization:
\begin{equation}\label{Riem-down-symm}
    r_{\lambda\kappa\mu\nu} := R_{[\lambda\kappa]\mu\nu} + R_{[\mu\nu]\lambda\kappa} \, ,
\end{equation}
and finally eliminate traces:
\begin{equation} \label{weylA}
w_{{\alpha\beta}\mu\nu} := r_{{\alpha\beta}\mu\nu} -\frac12 \left(
r_{\alpha\mu}\eta_{\beta\nu}-r_{\alpha\nu}\eta_{\beta\mu} +\eta_{\alpha\mu}r_{\beta\nu}-\eta_{\alpha\nu}r_{\beta\mu} \right)
+ \frac16 (\eta_{\alpha\mu}\eta_{\beta\nu}-\eta_{\alpha\nu}\eta_{\beta\mu})r \, ,
\end{equation}
where we denoted:
\begin{equation} \label{riemA}  r_{\alpha\beta}= r^\mu{_{\alpha\mu\beta}}
\, , \quad r=r_{\mu\nu}\eta^{\mu\nu} \, .
\end{equation}
This object fulfills already identities (\ref{sym-asym}--\ref{slad-W=0}) i.e. is a genuine Fierz-Lanczos field.

Decomposing $\Gamma_{\lambda\mu\nu}$ into irreducible parts, we see that only one of them enters into definition \eqref{weylA} of $w$. Taking into account its symmetry:  $\Gamma_{\lambda\mu\nu}=\Gamma_{\lambda(\mu\nu)}$, we first decompose it into the totally symmetric part and the remaining part whose totally symmetric part vanishes:
\begin{equation}\label{decomp1}
    \Gamma_{\lambda\mu\nu} = \Gamma_{(\lambda\mu\nu)} + \widetilde{\Gamma}_{\lambda\mu\nu} \, ,
\end{equation}
with $\widetilde{\Gamma}_{(\lambda\mu\nu)} = 0$. This way 40 independent components of $\Gamma$ split into 20 components of the totally symmetric, rank 3 tensor and the remaining 20 components of $\widetilde{\Gamma}$. The first part drops out from (\ref{Riem-down}).

Instead of  $\widetilde{\Gamma}$, in most papers devoted to Lanczos potentials, the authors use its antisymmetrization in first indices:
\begin{equation}\label{tildeA}
    \widetilde{A}_{\lambda\mu\nu} := \widetilde{\Gamma}_{[\lambda\mu]\nu} \, .
\end{equation}
Vanishing of the totally symmetric part of $\widetilde{\Gamma}$ implies vanishing of the totally antisymmetric part of the new object: $\widetilde{A}_{[\lambda\mu\nu]}=0$. We stress, however, that both objects are equivalent: no information is lost during such an antisymmetrization, because there is a canonical isomorphism between  both types of tensors. Indeed, it is easy to check that the inverse transformation (from  $\widetilde{A}$ to $\widetilde{\Gamma}$) is given by the symmetrization operator:
\begin{equation}\label{inv-tildeA}
    \widetilde{\Gamma}_{\lambda\mu\nu} = \frac 34 \widetilde{A}_{\lambda(\mu\nu)} \, .
\end{equation}
We see that (\ref{Riem-down}) and (\ref{Riem-down-symm}) imply:
\begin{equation}
    r_{\lambda\kappa\mu\nu} := - \widetilde{A}_{\lambda\kappa\mu ; \nu} + \widetilde{A}_{\lambda\kappa\nu ; \mu}
    - \widetilde{A}_{\mu\nu\lambda ; \kappa} + \widetilde{A}_{\mu\nu\kappa ; \lambda}  \, .
\end{equation}
Finally, when passing to the Fierz-Lanczos field  (\ref{weylA}), the trace $\widetilde{A}_\lambda:=\widetilde{A}_{\lambda\mu\nu}g^{\mu\nu}$ drops out. Hence, we define the Lanczos potential as the traceless part of $\widetilde{A}$:
\begin{equation}\label{A-final}
    A_{\lambda\mu\nu}:= \widetilde{A}_{\lambda\mu\nu} - \frac 13\left( \widetilde{A}_\lambda g_{\mu\nu}
    - \widetilde{A}_\mu g_{\lambda\nu} \right)  \, .
\end{equation}
This object fulfills the following algebraic identities:
\begin{eqnarray}
  {A}_{ \lambda\mu\nu } &=& - {A}_{\mu \lambda\nu } \label{A-sym} \, ,\\
  {A}_{[ \lambda\mu\nu ]} &=& 0 \, ,\label{A-Bianchi}\\
  {A}_{\lambda\mu}{^{\mu}} &=& 0  \,  \label{slad=0}
\end{eqnarray}
(see also \cite{Edgar} and \cite{BampiCaviglia}). It has 16 independent components, because 4 among the original 20 was carried by the trace $\widetilde{A}_\lambda$.

The field $w$ written explicitly in terms of $A$ looks as follows (see \cite{Edgar}):
\begin{equation} \label{wodA}
w_{{\alpha\beta}\mu\nu} = 2A_{{\alpha\beta}[\nu ; \mu]}
+ 2A_{{\nu\mu}[\alpha ; \beta]} - (A^\sigma{_{(\alpha\mu);\sigma}}\eta_{\beta\nu} - A^\sigma{_{(\alpha\nu);\sigma}}\eta_{\beta\mu} +
A^\sigma{_{(\beta\nu);\sigma}}\eta_{\alpha\mu} - A^\sigma{_{(\beta\mu);\sigma}}\eta_{\alpha\nu}) \, .
\end{equation}

Let $\widetilde{\Gamma}_\lambda := \widetilde{\Gamma}_{\lambda\mu}{^\mu}$. Observe that $\gamma_{ \lambda\mu\nu }$ defined as the traceless part of $\widetilde{\Gamma}_{\lambda\mu\nu}$:
\[
    \gamma_{ \lambda\mu\nu } = \widetilde{\Gamma}_{\lambda\mu\nu} - \frac 13 \left(\widetilde{\Gamma}_\lambda g_{\mu\nu} - \widetilde{\Gamma}_{(\mu} g_{\nu) \lambda}     \right) \, ,
\]
contains the same information as $A_{\lambda\mu\nu}$:
\begin{equation}\label{gammal}
    A_{\lambda\mu\nu} = \gamma_{[\lambda\mu]\nu} \, ; \quad  \gamma_{\lambda\mu\nu} = \frac 34 A_{\lambda(\mu\nu)} \, .
\end{equation}
This object fulfills the following algebraic identities:
\begin{eqnarray}
  \gamma_{ \lambda\mu\nu } &=&  \gamma_{\lambda\nu\mu} \label{gamma-sym} \, ,\\
  \gamma_{( \lambda\mu\nu )} &=& 0 \, ,\label{gamma-Bianchi}\\
  \gamma_{\lambda\mu}{^{\mu}} &=& 0  \, , \label{gamma-slad=0}
\end{eqnarray}
and the corresponding expression for the Fierz-Lanczos field reads:
\begin{equation}
w_{{\alpha\beta}\mu\nu} = 2\gamma_{{[\alpha\beta]}[\nu ; \mu]}
+ 2\gamma_{{[\nu\mu]}[\alpha ; \beta]} - \frac 34\left( \gamma^\sigma{_{\alpha\mu;\sigma}}\eta_{\beta\nu} - \gamma^\sigma{_{\alpha\nu;\sigma}}\eta_{\beta\mu} +
\gamma^\sigma{_{\beta\nu;\sigma}}\eta_{\alpha\mu} - \gamma^\sigma{_{\beta\mu ;\sigma}}\eta_{\alpha\nu}\right) \, .
\end{equation}
Hence, there are two equivalent versions of potentials for the Fierz-Lanczos field.
In what follows, we shall use $A_{\lambda\mu\nu}$ -- the version proposed by Lanczos, as being more popular in the literature.

\section{A relativistic variational principle for Fierz-Lanczos theory}

Take an invariant Lagrangian density $L = L(w)$. It depends upon potentials and its first derivatives {\em via} $w$, exclusively. Euler-Lagrange'a equations
\begin{equation}\label{E-L}
    \frac {\delta L}{\delta A_{\lambda\mu\nu}} = 0 \,
\end{equation}
can be written in a ``symplectic'' way
\begin{equation}\label{E-L-sympl}
    \delta L (A, \partial A) = \partial_\kappa \left( {\cal W}^{\lambda\mu\nu\kappa}\delta A_{\lambda\mu\nu} \right) =
    \left(\partial_\kappa {\cal W}^{\lambda\mu\nu\kappa} \right) \delta A_{\lambda\mu\nu} + {\cal W}^{\lambda\mu\nu\kappa} \delta A_{\lambda\mu\nu , \kappa} \, ,
\end{equation}
or, equivalently:
\begin{eqnarray}
 \partial_\kappa {\cal W}^{\lambda\mu\nu\kappa} &=& \frac {\partial L}{\partial A_{\lambda\mu\nu}} \, ,\\
 {\cal W}^{\lambda\mu\nu\kappa} &=& \frac {\partial L}{\partial A_{\lambda\mu\nu , \kappa}} \, .
\end{eqnarray}
Canonical momentum ${\cal W}$ is a tensor density, because $L$ was a scalar density and we can equivalently use tensor $W$, such that ${\cal W} = \sqrt{|\det g |} W$.
These equations can be formulated in a covariant form. We observe for this purpose, that expression ${\cal W}^{\lambda\mu\nu\kappa}\delta A_{\lambda\mu\nu}$ is a vector density, so its (partial) divergence is equal to covariant divergence. Therefore, equation \eqref{E-L-sympl} can be rewritten:
\begin{equation}\label{E-L-sympl-cov}
    \delta L (A, \partial A) = \nabla_\kappa \left( {\cal W}^{\lambda\mu\nu\kappa}\delta A_{\lambda\mu\nu} \right) =
    \left(\nabla_\kappa {\cal W}^{\lambda\mu\nu\kappa} \right) \delta A_{\lambda\mu\nu} + {\cal W}^{\lambda\mu\nu\kappa} \delta A_{\lambda\mu\nu ; \kappa} \, .
\end{equation}
But $L$ does not contain components of $A$ \textit{explicite} but only covariant derivatives of $A$. Hence, we obtain field equations:
\begin{eqnarray}
  \nabla_\kappa {\cal W}^{\lambda\mu\nu\kappa} &=& 0 \label{rPola1} \, , \\
  {\cal W}^{\lambda\mu\nu\kappa} &=& \frac {\partial L}{\partial A_{\lambda\mu\nu ; \kappa}} \, . \label{rPola2}
\end{eqnarray}
First equation is universal, but relation between $w$ and its momentum ${\cal W}$ is implied by a specific form of the Lagrangian. Define derivative of $L$ with respect to $w$ by the following identity:
\begin{equation}\label{def_w}
     \delta L = \frac {\partial L}{\partial w_{\lambda\mu\nu \kappa}} \  \delta w_{\lambda\mu\nu \kappa} \,.
\end{equation}
The quantity $\frac {\partial L}{\partial w_{\lambda\mu\nu \kappa}}$ belongs to the (vector) space of contravariant tensor densities. Due to the spacetime metric $g$, it is equipped with the (pseudo-)Euclidean, non-degenerate structure. Splitting  this vector space into a direct sum of tensors having the same symmetries as the Weyl tensor and its orthogonal complement (we denote by $P_w$ and $P^\perp_w$, respectively, the corresponding projections), we write
\begin{equation}
    \frac {\partial L}{\partial w_{\lambda\mu\nu \kappa}} = P_w\left(\frac {\partial L}{\partial w_{\lambda\mu\nu \kappa}}\right) + P^\perp_w\left(\frac {\partial L}{\partial w_{\lambda\mu\nu \kappa}}\right)
\end{equation}
and, consequently,
\begin{equation}
     \delta L = \left[ P_w\left(\frac {\partial L}{\partial w_{\lambda\mu\nu \kappa}}\right) + P^\perp_w\left(\frac {\partial L}{\partial w_{\lambda\mu\nu \kappa}}\right)\right] \delta w_{\lambda\mu\nu \kappa}  =  P_w\left(\frac {\partial L}{\partial w_{\lambda\mu\nu \kappa}}\right)\delta w_{\lambda\mu\nu \kappa} \,.
\end{equation}
We see that condition $\frac {\partial L}{\partial w_{\lambda\mu\nu \kappa}} = P_w\left(\frac {\partial L}{\partial w_{\lambda\mu\nu \kappa}}\right)$ is necessary to give an unambiguous meaning to the definition (\ref{def_w}): it must fulfil the same algebraic identities as $w$ does. Whence:
\[
    \delta L = \frac {\partial L}{\partial w_{\lambda\mu\nu \kappa}} \  \delta w_{\lambda\mu\nu \kappa}= \frac {\partial L}{\partial w_{\lambda\mu\nu \kappa}} \  \delta r_{\lambda\mu\nu \kappa} = 4\  \frac {\partial L}{\partial w_{\lambda\mu\nu \kappa}} \ \delta A_{\lambda\mu\kappa ; \nu} \, ,
\]
which means that:
\begin{equation}\label{pole3}
   {\cal W}^{\lambda\mu\kappa\nu} = 4\  \frac {\partial L}{\partial w_{\lambda\mu\nu\kappa}} \, .
\end{equation}
Taking (cf.~\cite{Cartin})
\begin{equation}\label{LP-F}
    L = \frac1{16} \sqrt{|\det g |} w^{\lambda\mu\nu \kappa} w_{\lambda\mu\nu \kappa}
\end{equation}
we obtain
\[
    \delta L = \frac 18 \sqrt{|\det g |} \ w^{\lambda\mu\nu \kappa} \ \delta w_{\lambda\mu\nu \kappa}
    =  \frac 12 \sqrt{|\det g |} \ w^{\lambda\mu\nu \kappa} \ \delta A_{\lambda\mu\kappa ; \nu} \, ,
\]
so finally:
\begin{equation}\label{pole4}
    {\cal W}^{\lambda\mu\nu\kappa} = - {\cal W}^{\lambda\mu\kappa\nu} = -\frac 12 \sqrt{|\det g |} \, w^{\lambda\mu\nu \kappa} \, .
\end{equation}

\section{(3+1)-decomposition of the Lanczos potentials. Analogy with electrodynamics}

In (3+1)-decomposition the ``velocity tensor'' $w$ can be represented by two $3D$ symmetric, traceless tensors\footnote{For simplicity, we restrict ourselves to the flat case. This means that the Cauchy surface $\{ t = {\rm const.}\}$ carries the flat Euclidean metric $\eta_{kl}$ and we use Cartesian coordinates. Consequently, components of the tensor density $\epsilon^{jkl}$ are equal to the corresponding  components of the Levi-Civita tensor $\varepsilon^{jkl}= \epsilon^{jkl}/\sqrt{\det \eta_{kl}}$ since $\det \eta_{kl} =1$. Generalization to the curved space is relatively straightforward.}, which we call $E$ and $B$:
\begin{equation}\label{E-oraz-B}
E_{kl} =  w_{0k0l} \, , \quad
 \quad {B}^{ji} = \frac 12 \epsilon^{jkl}  w^{0i}{_{kl}}
\, .
\end{equation}
In analogy with electrodynamics, the corresponding components\footnote{Introducing ${\cal F}^{\lambda\mu\nu\kappa}:= -2{\cal W}^{\lambda\mu\nu\kappa}= \sqrt{|\det g |} \, w^{\lambda\mu\nu \kappa}$ we can define $D$, $H$ in a way analogous to (\ref{E-oraz-B}): $D^{kl} := {\cal F}^{0k0l}$ and $H_{kl}:= \frac12 \epsilon_{kij} {\cal F}^0{_l}^{ij}$.}
of the ``momentum tensor'' $W$ could be called  $D$ and $H$ (cf. \eqref{D-oraz-H}), but the Lagrangian \eqref{LP-F} implies the ``constitutive equations'' \eqref{pole4} equivalent to: $D=E$, $H=B$.
It is easy to show (proof in the Appendix), that
\begin{equation}\label{w-DB}
    w_{\lambda\mu\nu\kappa} w^{\lambda\mu\nu\kappa} = 8 \left( E^2 - B^2 \right) \Longrightarrow
    L= \frac 12 \sqrt{|\det g |} \left( E^2 - B^2 \right) \, .
\end{equation}

The Lanczos potential $A$, which has 16 independent components, splits into two symmetric, traceless, three-dimensional tensors $P_{ij}$ and $S_{ij}$ and two three-dimensional covectors $a_i$ and $b_i$. The latter are defined {\em via} decomposition of the three-dimensional two-form $A_{ij0}$:
\begin{eqnarray}
  a_i &=& -A{_{0i0}} \, , \\
  b^i &=& -\frac 12 \varepsilon^{ikl} A_{kl0}   \ \  \Leftrightarrow  \ \   A_{ij0}=-b^m\varepsilon_{mij} \, ,
\end{eqnarray}
whereas $P$ and $S$ are defined  as a symmetric part of $A_{0kl}$ and $A_{ijk}\varepsilon^{ij}{_l}$, respectively. Antisymmetric parts of them are already given by $a$ and $b$, due to identities fulfilled by $A$. More precisely, we have (proof in the Appendix):
\begin{eqnarray}
\label{A0kl}
  A_{0kl} &=& -2P_{kl}+\frac12 b_j \varepsilon^j{_{kl}} \, , \\
 \label{Aijk}
 \frac12 A_{ijk}\varepsilon^{ij}{_l} &=& -2S_{kl} + \frac12 a_j \varepsilon^j{_{kl}} \ \  \Leftrightarrow  \ \  A_{ijk} =    -2S_{kl}\varepsilon^l{_{ij}}+ \frac12(a_i\eta_{jk}-a_j\eta_{ik}) \, .
\end{eqnarray}
Relation \eqref{weylA} between potentials  $A$ and the field $w$ can be written in terms of these three-dimensional objects. We obtain (proof in the Appendix):
\begin{eqnarray}
  E_{kl}= w_{0k0l} &=&  - \partial_0 P_{kl} + \partial_i S_{j(k}\varepsilon_{l)}{^{ij}}
 + \frac34 (\partial_l a_{k}+\partial_k a_{l}) - \frac12 \eta_{kl}\partial_i a{^i} \, ,  \\
   B_{kl}= \frac12 \varepsilon^{ij}{_{l}} w_{k0ij} &=&  \partial_0 S_{kl} + \partial_i P_{j(k}\varepsilon_{l)}^{\ \ ij}
-\frac34 (\partial_l b_{k} + \partial_k b_{l})+\frac12 \eta_{kl}\partial_i b^i \, .
\end{eqnarray}
These relations can be written shortly as:
\begin{equation}
\label{weylodA}
 E =-\dot P +{\rm curl} \, S + \frac32TS(\nabla a)   \, , \quad   B = \dot S + {\rm curl} \,  P -\frac32 TS(\nabla b) \, ,
\end{equation}
where by ``$TS(\nabla b)$'' we denote the {\em  traceless, symmetric part} of $\nabla b$. Hence, in Lorentzian coordinates, Lagrangian density of the theory can be expressed in terms of potentials as:
\begin{eqnarray}
  L &=& \frac 1{16} \sqrt{|\det g |} w_{\lambda\mu\nu\kappa} w^{\lambda\mu\nu\kappa} = \frac12 \sqrt{|\det g |} \left( E^2 - B^2 \right)  \\
    &=& \frac 12  \left\{ \left( \dot P - {\rm curl} \, S - \frac32 TS(\nabla a)\right)^2 - \left(  \dot S + {\rm curl} \,  P -\frac32 TS(\nabla b)\right)^2 \right\} \label{Lagr1} \, .
\end{eqnarray}
We see, that constraints (\ref{dD}--\ref{dB}) are obtained from variation of $L$ with respect to $a$ and $b$, whereas dynamical equations (\ref{rB}--\ref{rD}) from variation with respect to $P$ and $S$. This equations expressed by potentials $(P,S,a,b)$ have the following form:
\begin{align}\label{potP}
\frac32TS\left(\nabla(\dot a + \frac12\curl b)\right) &= \ddot P+ \curl\curl P \, , \\ \label{potS}
  \frac32TS\left(\nabla(\dot b - \frac12\curl a)\right) &= \ddot S+ \curl\curl S \, .
\end{align}

\section{Fierz-Lanczos formulation of Maxwell electrodynamics}

In (3+1)-decomposition, Fierz-Lanczos theory shows a far reaching analogy with electrodynamics. The only difference is that in FL theory we have two ``vector potentials'' ($P$ and $S$) instead of one ($A_k$) in electrodynamics, and two ``scalar potentials'' ($a$ and $b$) instead of one ($A_0$) in electrodynamics.  To clarify this structure, we show in this Section how to formulate here classical electrodynamics in a similar way, i.e. using two independent potentials.

Conventionally, classical (linear or non-linear) electrodynamical field is described by two differential two-forms: $f = f_{\mu\nu} {\rm d} x^\mu \wedge {\rm d} x^\nu$ and ${\cal F} = \frac 12 {\cal F}^{\mu\nu} \epsilon_{\mu\nu\alpha\beta} {\rm d} x^\alpha \wedge {\rm d} x^\beta$. First pair of Maxwell equations: ${\rm d}f = 0$ and the second pair: ${\rm d} {\cal F} = J$ are universal, whereas ``constitutive equations'', i.e. relation between $f$ and ${\cal F}$ depends upon a model. In particular, linear Maxwell theory corresponds to the relation ${\cal F} = *f$, where by ``$*$'' we denote the Hodge ``star operator''.

Usually, we derive the theory from the variational principle, where the first pair of Maxwell equations is assumed {\em a priori}. For this purpose we substitute: $f = {\rm d} A$, or
\[
f_{\mu\nu}= \partial_\mu A_\nu - \partial_\nu A_\mu=
A_{\nu , \mu} - A_{\mu , \nu}
 \]
in coordinate notation, where $A = (A_\mu)$ is a four-potential one-form and $A_{\nu , \mu}:= \partial_\mu A_\nu$. In $(3+1)$-decomposition, electric and magnetic fields are then defined by components of $f$:
\begin{equation} \label{polaEB}
(f_{k0}) = \vec{E}   = - \dot {\vec{A}}   +\vec{\nabla} A_0 \, , \quad \frac 12 \left( \epsilon^{mkl} f_{kl}\right) = \vec{B}  =   {\rm curl} \vec{A}     \, ,
\end{equation}
whereas inductions: $\vec{D}$ and $\vec{H}$ arise as corresponding canonical momenta. More precisely, variational principle can be written as follows:
\begin{equation}
\delta L(A_\nu , A_{\nu , \mu}) = \partial_\mu ({\cal F}^{\nu\mu} \delta
A_\nu)  =
(\partial_\mu {\cal F}^{\nu\mu}) \delta A_\nu +
{\cal F}^{\nu\mu} \delta A_{\nu , \mu} \ , \label{deltaL-Elmag}
\end{equation}
equivalent to
\begin{eqnarray}
\partial_\mu {\cal F}^{\nu\mu}  =  \frac {\partial L}{\partial A_\nu} = J^\nu  \; ,
\qquad
{\cal F}^{\nu\mu}  =  \frac {\partial L}{\partial A_{\nu , \mu}}
= 2 \frac {\partial L}{\partial f_{\mu\nu}}
\, ,
\label{E-LpoleElmag}
\end{eqnarray}
where the components of the canonical momentum tensor ${\cal F}$ are:
\begin{equation}\label{HD}
 {\cal F}^{0k}=- {\cal F}^{k0}=  {\cal D}^k= \sqrt{\det g_{mn}}\  D^k \; ,
\quad
{\cal F}^{kl}=   \epsilon^{klm} H_m \; , \quad H_m = \frac 12  \epsilon_{mkl} {\cal F}^{kl}
\, .
\end{equation}
 For linear (Maxwell) theory the Lagrangian density of the theory equals:
\begin{equation}\label{Lagr-Maxwell}
    L = - \frac 14 \sqrt{|\det g |} f_{\mu\nu} f^{\mu\nu} =\frac 12 \sqrt{|\det g |} \left( E^2 - B^2 \right) \, ,
\end{equation}
and, whence, ${\cal F}^{\nu\mu} = \sqrt{|\det g |} f^{\mu\nu}$ or, equivalently, ${\cal F} = *f$. Consequently, ``momenta'' are equal to ``velocities'': $D=E$ and $H=B$.

In absence of currents (i.e. when $J$=0), both the electric and magnetic fields play a symmetric role. This means that the Hodge-star operator ``$*$'' is an additional symmetry of the theory\footnote{In Lorentzian coordinates the Hodge operator ``*'' transforms: $E \rightarrow -B$ and $B \rightarrow E$. Similarly, $D \rightarrow -H$ and $H \rightarrow D$.} and we could, as well, begin with a potential $(C_\mu)=(C_0 , \vec{C})$ for the dual form $\textsl{h}= *  f$:
\begin{equation} \label{polaEB-star}
(\textsl{h}_{k0}) = -\vec{B}   = - \dot {\vec{C}}   +\vec{\nabla} C_0 \, , \quad \frac 12 \left( \epsilon^{mkl} \textsl{h}_{kl}\right) = \vec{E}  =   {\rm curl} \vec{C}     \, .
\end{equation}
Variational principle
\begin{equation}
\delta L(C_\nu , C_{\nu , \mu}) = \partial_\mu ({\cal H}^{\nu\mu} \delta
C_\nu)  =
(\partial_\mu {\cal H}^{\nu\mu}) \delta C_\nu +
{\cal H}^{\nu\mu} \delta C_{\nu , \mu} \ , \label{deltaL-ElmagC}
\end{equation}
of the same Lagrangian density
\begin{equation}\label{Lagr-Maxwell-h}
    L = - \frac 14 \sqrt{|\det g |} \textsl{h}_{\mu\nu} \textsl{h}^{\mu\nu} =\frac 12 \sqrt{|\det g |} \left( E^2 - B^2 \right) \, ,
\end{equation}
gives now the same field equations:
\begin{eqnarray}
\partial_\mu {\cal H}^{\nu\mu}  =  \frac {\partial L}{\partial C_\nu} = 0 \; ,
\qquad
{\cal H}^{\nu\mu}  =  \frac {\partial L}{\partial C_{\nu , \mu}}
= 2 \frac {\partial L}{\partial \textsl{h}_{\mu\nu}}
\ ,
\label{E-LpoleElmag-h}
\end{eqnarray}
with $D=E$ and $H=B$ playing a role of the corresponding canonical momenta ${\cal H} = * \textsl{h} = **f = -f$:
\begin{equation}\label{HD-C}
 {\cal H}^{0k}=- {\cal H}^{k0}=  -  \sqrt{\det g_{mn}}\,  H^k \; ,
\qquad
{\cal H}^{kl}=   \epsilon^{klm} D_m \; , \qquad D_m = \frac 12  \epsilon_{mkl} {\cal H}^{kl}
\, .
\end{equation}
The sum of \eqref{Lagr-Maxwell} and \eqref{Lagr-Maxwell-h} would imply the theory of two independent copies of  electromagnetic field, say $f$ and $\widetilde{f}$, such that $*\textsl{h}=\widetilde{f}$:
\begin{eqnarray}
  \delta L &=& \frac12\left[ (\partial_\mu {\cal F}^{\nu\mu}) \delta A_\nu + (\partial_\mu {\cal H}^{\nu\mu}) \delta C_\nu + {\cal F}^{\nu\mu} \delta A_{\nu , \mu}
  + {\cal H}^{\nu\mu} \delta C_{\nu , \mu} \right] \, . \label{due}
\end{eqnarray}
To have only one copy, we must impose constraint: ${\cal H} = *{\cal F}$. The constraint is equivalent to the requirement that $L$ depends only upon the sum ``$f+*\textsl{h}$'' and not upon the two potentials independently. Indeed, due to constraint we have:
\begin{eqnarray}
 {\cal F}^{\nu\mu} \delta A_{\nu , \mu}
  + {\cal H}^{\nu\mu} \delta C_{\nu , \mu} &=& {\cal F}^{\nu\mu} \delta A_{\nu ,\mu}
  +\left( * {\cal F}\right)^{\nu\mu} \delta C_{\nu , \mu} =
  {\cal F}^{\nu\mu} \delta \left( A_{\nu , \mu} + (* C)_{\nu , \mu} \right) \nonumber \\
    &=&   \frac 12 {\cal F}^{\nu\mu} \delta \left( f + * \textsl{h} \right)_{\mu\nu} \, . \label{F=f+*h}
\end{eqnarray}

Hence, for linear electrodynamics, we can take
\begin{eqnarray}
    L(A_\nu , C_\nu ,  A_{\nu , \mu}, C_{\nu , \mu} ) &=&  \frac 12 \sqrt{|\det g |} \left( E^2 - B^2 \right)  \nonumber \\
    &=&  - \frac 14 \sqrt{|\det g |} \left( f+ *\textsl{h} \right)_{\mu\nu} \left(  f +  *\textsl{h} \right)^{\mu\nu} \nonumber \\
    &=& - \frac 14 \sqrt{|\det g |} \left( {\rm d}A + *({\rm d} C) \right)_{\mu\nu} \left(  {\rm d}A + *({\rm d} C) \right)^{\mu\nu}  \, ,  \label{Lagr-Maxwell-hf}
\end{eqnarray}
which leads to a single copy of Maxwell electrodynamics with the Faraday tensor $\varphi := f+*\textsl{h}$ defined in terms of the two independent four-potentials $A$ and $C$:
\begin{equation}\label{fi}
    \varphi = {\rm d} A + * {\rm d} C \, .
\end{equation}
Moreover,
\begin{equation}\label{varfi}
   {\cal F}^{\nu\mu} =  - 2 \sqrt{|\det g |} \varphi^{\mu\nu}
\end{equation}
and
\[ L=  - \frac 14 \sqrt{|\det g |} \varphi_{\mu\nu} \varphi^{\mu\nu} \, . \]
Equation (\ref{fi}) in $(3+1)$-decomposition, reads:
\begin{equation} \label{pedyEMspin1}
\vec{E}   = - \dot {\vec{A}} + {\rm curl} \vec{C}  +\vec{\nabla} A_0 \, , \quad  \vec{B}  = \dot {\vec{C}} +  {\rm curl} \vec{A} -  \vec{\nabla} C_0   \, .
\end{equation}
Unlike in the standard variational formulation of electrodynamics: 1) the variation is performed with respect to two independent potentials: $A_\mu$ and $C_\mu$, and 2) the first pair of Maxwell equations is not imposed {\em a priori} but obtained from the variational principle. So, the complete set of Maxwell equations
\begin{eqnarray}
  {\rm div} D  &=& 0 \label{dDe}\\
  {\rm div} B &=& 0 \label{dBe}  \\
  \dot{D} &=& {\rm curl}\, B  \label{rBe}\\
  \dot{B} &=&  - {\rm curl} \, D  \label{rDe} \, ,
\end{eqnarray}
is {\em derived}, not {\em imposed} a priori.
Expressed in terms of potentials $(\vec A,\vec C, A_0, C_0)$, these equations read:
\begin{align}\label{potA}
\nabla\dot A_0 &= \ddot {\vec A} + \curl\curl \vec A \, ,\\ \label{potC}
  \nabla\dot C_0 &= \ddot {\vec C}+ \curl\curl \vec C \, .
\end{align}

The gauge group of such a theory is much bigger than the usual ``gradient gauge'': it is composed of all the transformations of the four-potentials which do not change the value of the field $\varphi$. Hence, not only ``$A \rightarrow A + {\rm d} \phi$'' and ``$C \rightarrow C + {\rm d} \psi$'', with two arbitrary functions $\phi$ and $\psi$ but, more generally, any transformation of the type
\begin{equation}\label{gauge-chi}
    A \rightarrow A + \xi \ \ \ ; \ \ \  C \rightarrow C + \eta \, ,
\end{equation}
where the four-covector fields $\xi= (\xi_\mu )$ and $\eta= (\eta_\mu )$ satisfy equation:
\begin{equation}\label{g-e}
    {\rm d} \xi + * {\rm d} \eta = 0 \, .
\end{equation}
It is obvious that both such ${\rm d} \xi$ and ${\rm d} \eta$ fulfill free Maxwell equations. In particular, the case ${\rm d} \xi = {\rm d} \eta = 0$ corresponds to the standard ``gradient gauge''.

We show in the sequel that, from the Hamiltonian point of view, such an exotic formulation of electrodynamics is perfectly equivalent to the standard formulation, using a single four-potential $(A_\mu)$.

\section{Hamiltonian Picture and Field Energy}

\subsection{Electromagnetic field energy in conventional formulation}\label{conventional}

Field energy is defined as the Hamiltonian function generating time evolution of the  field. To calculate its value, a $(3+1)$-decomposition has to be chosen and the Legendre transformation between ``velocities'' and ``momenta'' must be performed in the Lagrangian generating formula. In conventional formulation of electrodynamics we begin, therefore, with formula \eqref{deltaL-Elmag}:
\begin{eqnarray}
  \delta L &=& \partial_\mu ({\cal F}^{\nu\mu} \delta A_\nu)  =
\partial_0 ({\cal F}^{\nu 0} \delta A_\nu)  + \partial_k ({\cal F}^{\nu k} \delta A_\nu) \nonumber \\
    &=& \partial_0 ({\cal F}^{k 0} \delta A_k) + \partial_k ({\cal F}^{0 k} \delta A_0 + {\cal F}^{l k} \delta A_l) \nonumber \\
    &=& - \partial_0 ({\cal D}^{k} \delta A_k) + \partial_k ({\cal D}^{k} \delta A_0 + {\cal F}^{l k} \delta A_l) \nonumber \\
    &=& - {\dot{\cal D}}^{k} \delta A_k - {\cal D}^{k} \delta \dot{A}_k + \partial_k ({\cal D}^{k} \delta A_0 + {\cal F}^{l k} \delta A_l)
    \nonumber \\
    &=& \dot{A}_k \delta {\cal D}^{k} - {\dot{\cal D}}^{k} \delta A_k - \delta \left( {\cal D}^{k}  \dot{A}_k
    \right) + \partial_k ({\cal D}^{k} \delta A_0 + {\cal F}^{l k} \delta A_l)
    \, . \label{deltaL-Elmag-n}
\end{eqnarray}
Putting the complete derivative $\delta \left( {\cal D}^{k}  \dot{A}_k \right)$ on the left hand side, we obtain
\begin{equation}
- \delta \left(- {\cal D}^{k}  \dot{A}_k - L \right) = \dot{A}_k \delta {\cal D}^{k} - {\dot{\cal D}}^{k} \delta A_k
+ \partial_k ({\cal D}^{k} \delta A_0 + {\cal F}^{l k} \delta A_l)
 \ , \label{deltaL-Elmag-n1}
\end{equation}
which is analogous to the Hamiltonian formula $- \delta (p \dot{q} - L ) = \dot{p} \delta q - \dot{q} \delta p$ in mechanics, where $-\vec{\cal D}$ is the momentum canonically conjugate to $\vec{A}$ and ${\cal H} = - {\cal D}^{k}  \dot{A}_k - L$ is the Hamiltonian density. The boundary term $\partial_k ({\cal D}^{k} \delta A_0 + {\cal F}^{l k} \delta A_l)$ is usually neglected by sufficiently strong fall-off conditions at infinity. We stress, however, that the above symplectic approach enables one to localize energy within a (not necessary infinite) 3D volume $V$ with boundary $\partial V$. For this purpose we integrate \eqref{deltaL-Elmag-n1} over $V$ and obtain
\begin{equation}\label{H_V}
    - \delta {\cal H}_V = \int_V \left( \dot{A}_k \delta {\cal D}^{k} - {\dot{\cal D}}^{k} \delta A_k  \right) +
    \int_{\partial V} \left( {\cal D}^{\perp} \delta A_0 - {\cal F}^{\perp l} \delta A_l \right) \, ,
\end{equation}
where by ``$\perp$'' we denote the component perpendicular to the boundary and ${\cal H}_V = \int_{V}{\cal H}$. Imposing boundary conditions for $A_0$ and for $A_{\|}$ (components of $\vec{A}$ tangent to $\partial V$), we obtain an infinitely dimensional Hamiltonian system generated by the Hamiltonian functional equal to the ``Noether energy'' ${\cal H}_V$\footnote{The {\em time-time} component of the so called ``canonical'' energy-momentum tensor.}. Whereas controlling $A_{\|}$ at the boundary means to control ${\cal B}^\perp$, the control of the scalar potential $A_0$ means ``electric grounding'' of the boundary. This is {\em not} an adiabatic insulation of the field from the external World but rather a ``thermal bath'', with the Earth and its fixed scalar potential playing a role of the ``thermostat''. Hence, ${\cal H}_V$ is not the internal energy of the physical system: ``electro-magnetic field contained in $V$'', but rather its free energy: the uncontrolled flow of electric charges between $\partial V$ and the Earth plays the same role as the uncontrolled heat flow between the body and the thermostat during the isothermal processes. To avoid exchange of energy between the thermostat and the system, we must insulate it adiabatically. For this purpose we perform an extra Legendre transformation between ${\cal D}^{\perp}$ and $A_0$ at the boundary (cf. \cite{KIJ}):
\[
    {\cal D}^{\perp} \delta A_0 = \delta \left( {\cal D}^{\perp}  A_0 \right) - A_0 \delta {\cal D}^{\perp}
\]
and we obtain:
\begin{equation}\label{H_Vsym}
    - \delta \widetilde{\cal H}_V = \int_V \left( \dot{A}_k \delta {\cal D}^{k} - {\dot{\cal D}}^{k} \delta A_k  \right) +
    \int_{\partial V} \left(- A_0 \delta {\cal D}^{\perp}  - {\cal F}^{\perp k} \delta A_k \right) \, ,
\end{equation}
where
\begin{eqnarray}
  \widetilde{\cal H}_V &=& {\cal H}_V + \int_{\partial V} {\cal D}^{\perp}  A_0
  = \int_V - L - {\cal D}^{k}  \dot{A}_k + \partial_k \left( {\cal D}^{k}  A_0 \right) \\
    &=&  \int_V - L + {\cal D}^{k} \left( - \dot{A}_k + \partial_k   A_0 \right) = \int_V  {\cal D}^{k} E_k - L\, .
\end{eqnarray}
In linear Maxwell electrodynamics we obtain the standard, local, Maxwell energy density\footnote{The {\em time-time} component of the {\em symmetric} or Maxwell energy-momentum tensor.}:
\begin{equation}\label{e2+B2}
        - L + {\cal D}^{k} E_k = - \frac 12 \sqrt{|\det g |} \left( E^2 - B^2 \right) + \sqrt{|\det g |}  E^2 =
    \frac 12 \sqrt{|\det g |} \left( D^2 + B^2 \right) \, .
\end{equation}
The boundary term in \eqref{H_Vsym} vanishes if we control ${\cal D}^\perp$ and ${\cal B}^\perp$ on $\partial V$. Cauchy data are, therefore, described by : 1) electric induction $\vec{D}$ satisfying constraints \eqref{dDe}, and: 2) equivalence class of $\vec{A}$ {\em modulo} the gradient gauge $\vec{\nabla} A_0$ (each class uniquely represented by the magnetic field $\vec{B}$ satisfying constraints \eqref{dBe}). These Cauchy data form the phase space of the system equipped with the symplectic form
\begin{equation}\label{omega}
    \Omega = \int_V \delta A_k \wedge \delta D^k \, ,
\end{equation}
which is gauge-independent due to boundary conditions: $\delta D^\perp\left|_{\partial V} \right.= 0$. Due to this gauge-invariance, each class of equivalent field configurations can be uniquely represented by, i.e., the Coulomb-gauged potential $\widetilde{A}_k$ fulfilling the Coulomb gauge condition: ${\rm div} \, \widetilde{A} = 0$. Such a representant is unique if we impose the boundary condition $\delta \widetilde{A}^\perp\left|_{\partial V} \right.= 0$.
It can be proved that boundary conditions transform the Hamiltonian \eqref{e2+B2} into a genuine self-adjoint operator $\widetilde{\cal H}_V$, governing the field evolution on an appropriately chosen Hilbert-K\"ahler space of Cauchy data in $V$, and the symplectic form becomes:  $\Omega = \int_V \delta \widetilde{A}_k \wedge \delta D^k$.

\subsection{Phase space of Cauchy data}\label{Phase space}

The same conclusion may be obtained if we work directly with the field Cauchy data. To simplify notation, we use Lorentzian coordinates ($\sqrt{|\det g|} = 1$). According to \eqref{polaEB},  we have:
\begin{equation}\label{Lag-Cauchy}
    L = \frac 12 \left( \vec{E}^2 - \vec{B}^2 \right) = \frac 12 \left\{\left( \vec{\nabla} A_0 - \dot {\vec{A}}    \right)^2  - \left( {\rm curl} \vec{A} \right)^2 \right\} \, .
\end{equation}
We see that $A_0$ is a gauge variable because its momentum vanishes identically. Moreover, momentum canonically conjugate to 3D vector potential $\vec{A}$ equals:
\begin{equation}\label{DeqE}
    -\vec{D} := \frac {\partial L}{\partial \dot {\vec{A}}} = - \vec{E} \, .
\end{equation}
Consequently, variation of $L$ with respect to $A_0$ implies constraints:
\begin{equation}\label{constr}
   - \frac {\delta L}{\delta A_0} =  \partial_k D^k = 0 \, .
\end{equation}
Hence, we have:
\begin{eqnarray}
  \delta L &=&  D^k \delta \left( - \dot{A}_k + \partial_k A_0 \right)- B^k \delta \left(\epsilon_k^{\ ij} \partial_i A_j \right) = \\ \nonumber
  & & \hspace*{-1.5cm}
   \delta \left\{  D^k \left( - \dot{A}_k + \partial_k A_0 \right)\right\}   + \left( \dot{A}_k - \partial_k A_0 \right) \delta D^k + \partial_i \left( \epsilon^{ikj} B_k \delta A_j \right) - \left(\epsilon^{jik} \partial_i B_k
   \right) \delta A_j \, .
\end{eqnarray}
Putting the complete divergence $\delta \left(  D^k E_k \right)$ on the left-hand side, we obtain:
\begin{eqnarray}
  - \delta \left( D^k E_k -L \right)  &=& \dot{A}_k \delta D^k - \dot{D}^k \delta A_k + \partial_i
  \left( - A_0 \delta D^i + \epsilon^{ikj} B_k \delta A_j
  \right) \, ,
\end{eqnarray}
which finally implies \eqref{e2+B2} and \eqref{H_Vsym}. The boundary term vanishes if we control $D^\perp$ and $B^\perp = {\rm curl} A_{\|}$ on $\partial V$.

\subsection{Symplectic reduction in the FL formulation of electrodynamics}

In Fierz-Lanczos formulation we have more potentials, but also the gauge group (\ref{gauge-chi}--\ref{g-e}) is much bigger. In this Section we prove that -- when reduced with respect to constraints -- both formulations are perfectly equivalent. Hence, the Hamiltonian formulation and the notion of field energy does not depend upon a choice of a particular variational principle. Indeed, consider Lagrangian density \eqref{Lagr-Maxwell-h} and the corresponding Euler-Lagrange equations \eqref{E-LpoleElmag-h}:
\begin{eqnarray}
  {\rm div} \vec D  &=& 0 \label{divD}\\
  {\rm div} \vec B &=& 0  \label{divB} \\
  \dot{\vec D} &=& {\rm curl}\, \vec H  \label{dotD}\\
  \dot{\vec B} &=&  - {\rm curl} \, \vec E  \label{dotB} \\
  \vec D &=& \vec E \label{D=E} \\
  \vec H &=& \vec B \label{H=B}\;.
\end{eqnarray}
For fields satisfying these equations (i.e.~\textit{on shell}), integration by parts implies:
\begin{eqnarray}
  \delta \int_V L &=&  \int_V \left\{ \vec{D} \delta \left(- \dot {\vec{A}} + {\rm curl} \vec{C}  +\vec{\nabla} A_0 \right)
  - \vec H \delta \left(\dot {\vec{C}} +  {\rm curl} \vec{A} -  \vec{\nabla} C_0  \right) \right\} \\
   &=&  \int_V \left\{ - \vec{D} \delta  \dot{\vec{A}} + {\rm curl} \vec{D} \delta \vec{C} - \vec H \delta \dot {\vec{C}} - {\rm curl} \vec{H} \delta \vec{A}
   \right\} \\ \label{presymp}
   &=& - \int_V \left(\vec{D} \delta  \dot {\vec{A}}+ \dot {\vec{D}} \delta \vec{A} + \vec{H} \delta  \dot {\vec{C}}+ \dot {\vec{H}} \delta {\vec{C}}
   \right) = - \int_V \partial_0\left( \vec{D} \delta  {\vec{A}} + \vec{H} \delta  {\vec{C}} \right) \, .
\end{eqnarray}
Here, we have neglected the boundary integrals. They vanish because of appropriate boundary conditions which assure the adiabatic insulation of $V$.\footnote{The boundary conditions are necessary for the complete functional-analytic formulation of the Hamiltonian evolution. These issues (the appropriate definition of the Hilbert space of Cauchy data and the correct self-adjoint extension of the Hamiltonian) will be discussed in another paper.}
Hence, fields $\vec{D}$ and $\vec{H}$ play a role of (minus) momenta canonically conjugate to $\vec{A}$ and $\vec{C}$, respectively. To perform correctly Legendre transformation and obtain the value of the Hamiltonian function, we must reduce this symplectic structure to independent, physical degrees of freedom. For this purpose we use the Hodge decomposition of the space of three-dimensional vector fields $\vec X$ into two subspaces:
\be
\vec X = \vec X^v + \vec X^s\;,
\ee
where $\vec X^v$ is \textit{sourceless} (i.e. $\di\vec X^v = 0$) and $\curl\vec X^s = 0$. In particular, assuming trivial topology of the region $V$, we obtain that there exist a vector field $\vec W$ and a function $f$ such that  $\vec X^v = \curl \vec W$ and $\vec X^s = \vec \nabla f$.

Putting aside all the functional-analytic issues, consider field configuration having compact boundary in $V$. Integrating by parts, we see that $\vec X^v$ and $\vec X^s$ are mutually orthogonal\footnote{From the functional-analytic point of view the subspace of sourceless fields is defined as the $L^2$-closure of smooth, sourceless fields, having compact support in $V$ and the remaining subspace as its orthogonal complement in the Hilbert space $L^2$.}  in the Hilbert space $L^2$:
\[
(\vec X^v|\vec Y^s) = \int_V \vec X^v \cdot \vec Y^s = 0 \, .
\]

From (\ref{divD}--\ref{divB}) and (\ref{D=E}--\ref{H=B}) we have
\begin{align}
 \vec D^v &=  \vec{D} =  \vec{E} = \vec{E}^v \, ,\\
   \vec{H}^v &=  \vec{H} = \vec{B} = \vec{B}^v.
\end{align}
The sourceless parts of equations (\ref{potA}--\ref{potC}) imply wave equations for both $\vec A^v$ and $\vec C^v$.
 Define a sourceless vector potential $W$ for $C^v$, i.e. $\curl W = C^v$. Applying again the $\curl$ to this equation, we conclude that $\ddot W = -\curl\curl W$, i.e. $\square W = 0$.


Now, integrating by parts and using orthogonality relations, we reduce \eqref{presymp} as follows:
\begin{eqnarray*}
  \delta \int_V L &=& - \int_V \partial_0\left( \vec{D} \delta  {\vec{A}} + \vec{H} \delta  {\vec{C}} \right) = - \int_V \left(\vec{D} \delta  \dot {\vec{A}}+ \dot {\vec{H}} \delta {\vec{C}} + \vec{H} \delta  \dot {\vec{C}}
   + \dot {\vec{D}} \delta \vec{A}\right) \\
    &=&  -\int_V \left\{ \vec{D} \delta \dot {\vec{A}} - {\rm curl} \vec{D} \delta \vec{C} + \vec H \delta \dot {\vec{C}} + {\rm curl} \vec{H} \delta \vec{A}
   \right\} \\
   &=&  -\int_V \left\{\vec{D} \delta  \dot{{\vec{A}}}^v -\vec{D} \delta {\rm curl} \vec{C}^v +\vec H \delta \dot {\vec{C}}^v + \dot{\vec D} \delta \vec{A}^v
   \right\} \\
    &=&  -\int_V \left\{\vec{D} \delta  \dot{{\vec{A}}}^v -\vec{D} \delta \curl\curl \vec{W} + \vec H \delta \curl\dot {\vec{W}} + \dot{\vec D} \delta \vec{A}^v
   \right\} \\
    &=&  -\int_V \left\{ \vec{D} \delta  \dot{{\vec{A}}}^v +\vec{D} \delta \ddot {\vec{W}} +\curl\vec H \delta \dot {\vec{W}} +\dot{\vec D} \delta \vec{A}^v
   \right\} \\
    &=&  -\int_V \left\{ \vec{D} \delta  \dot{{\vec{A}}}^v +\vec{D} \delta \ddot {\vec{W}} + \dot{\vec D} \delta \dot {\vec{W}} + \dot{\vec D} \delta \vec{A}^v
   \right\} =
     -\int_V \left\{ \vec{D} \delta  (\dot{{\vec{A}}}^v+ \ddot {\vec{W}}) + \dot{\vec D} \delta (\vec{A}^v + \dot {\vec{W}}) \right\} \\
     &=& -\int_V \left( \vec{D} \delta  \dot{{\vec{\widetilde A}}} + \dot{\vec D} \delta \vec{\widetilde A}\right) = -\int_V \partial_0 \left( \vec{D} \delta  {{\vec{\widetilde A}}}\right) \, ,
\end{eqnarray*}
where we have defined the following, source-free, field: $\vec{\widetilde A} := \vec A^v + \dot{\vec W}$.

Hence, our original phase space $(\vec A,\vec C,\vec D,\vec H)$ of Cauchy data, equipped with a symplectic form $\omega = \delta \vec A\wedge\delta \vec D+ \delta \vec C\wedge\delta \vec H$, reduces {\em on shell} to $(\vec{\widetilde A}, \vec D)$ with a symplectic form $\widetilde\omega = \delta \vec{\widetilde A}\wedge\delta \vec D$, identical with the structure \eqref{omega} derived in Section \ref{conventional} from the conventional variational principle.

\subsection{Electromagnetic field energy in the FL formalism}

We see that the reduced (with respect to constraints) phase space  in Fierz-Lanczos formulation can be described by pair $(\vec{\widetilde A}, \vec D)$, where $\vec{\widetilde A} = \vec A^V + \dot{\vec W}$ plays a role of the field  configuration, whereas $- \vec D$ plays a role of its canonically conjugate momentum. It is, therefore equivalent to the corresponding phase space in the conventional formulation. Hence, Legendre transformation to the  Hamiltonian picture goes exactly as in Section \ref{conventional}:
\begin{align*}
H &= -L - \vec D\cdot\dot{\vec{\widetilde A}} = -\frac12(E^2-B^2) - \vec D\cdot(\dot{\vec A}^v + \ \ddot {\vec W}) = \frac12(B^2-D^2) - \vec D\cdot(\dot{{\vec A}}^v - \curl\curl \vec W) \\ &= \frac12(B^2-D^2) + \vec D(-\dot{{\vec A}}^v + \curl C^v) = \frac12(B^2-D^2) + \vec D\cdot \vec E^v = \frac12(D^2+B^2)\;,
\end{align*}
where we used the sourceless part of the first equation in \eqref{pedyEMspin1}: $\vec E^v = -\dot{{\vec A}}^v + \curl C^v $.

Reduction of the Fierz-Lanczos Lagrangian proposed in \cite{Cartin} (see our formula  \eqref{Lagr1}) can be obtained in a way entirely analogous to what was done above.

\subsection{Symplectic reduction of the spin-2 Fierz-Lanczos theory}

Take
\begin{eqnarray*}
  L &=& \frac 1{16} \sqrt{|\det g |} w_{\lambda\mu\nu\kappa} w^{\lambda\mu\nu\kappa} = \frac12 \sqrt{|\det g |} \left( D^2 - B^2 \right)  \\
    &=& \frac 12  \left\{ \left( \dot P - {\rm curl} \, S - \frac32 TS(\nabla a)\right)^2 - \left(  \dot S + {\rm curl} \,  P -\frac32 TS(\nabla b)\right)^2 \right\} \, .
\end{eqnarray*}
Euler-Lagrange equations (cf. \eqref{dD} and \eqref{pole4}) implied by $L$ read:
\begin{eqnarray}
  {\rm div} D  &=& 0 \label{divDs}\\
  {\rm div}  B &=& 0  \label{divBs} \\
  \dot{D} &=& {\rm curl}\, H  \label{dotDs}\\
  \dot{B} &=&  - {\rm curl} \, E  \label{dotBs} \\
 D &=& E \label{D=Es} \\
  H &=& B \label{H=Bs}\; .
\end{eqnarray}

For fields contained in a region $V$, satisfying proper boundary conditions, we can integrate $\delta L$ by parts and obtain \textit{on shell}:
\begin{eqnarray}
  \delta  \int_V L &=&  \int_V \left\{ {D} \delta \left(- \dot P + {\rm curl} S  +TS({\nabla} a) \right)
  -  H \delta \left(\dot S +  {\rm curl} P -  TS({\nabla} b)  \right) \right\} \\
   &=&  \int_V \left\{ - {D} \delta  \dot{P} + {\rm curl} {D} \delta S -  H \delta \dot S - {\rm curl} {H} \delta P
   \right\} \\ \label{presymps}
   &=& - \int_V \left({D} \delta  \dot {P}+ \dot {{D}} \delta P + \vec{H} \delta  \dot {S}+ \dot {\vec{H}} \delta {\vec{C}}
   \right) = - \int_V \partial_0\left( {D} \delta  {P} + \vec{H} \delta  {S} \right) \, .
\end{eqnarray}
Hence, fields $D$ and $H$ play a role of (minus) momenta canonically conjugate to $P$ and $S$, respectively. However, to perform correctly Legendre transformation and obtain Hamiltonian, we must reduce this symplectic structure to independent, physical degrees of freedom. For this purpose, we use decomposition of three-dimensional tensors of rank 2. Following Straumann (see \cite{Straumann}), an arbitrary 3D symmetric, traceless tensor $t_{kl}$ can be decomposed into three parts (called: {\em tensor}, {\em vector} and {\em scalar} parts, respectively):
\[
t_{kl} = t^t_{kl} + t^v_{kl} + t^s_{kl}\;,
\]
where
\begin{equation}
    \di t^t = 0\,, \quad
    {\rm tr}(t^t) = 0 \, ;\qquad
    t^v_{kl} = TS(\nabla \xi)_{kl}\,, \quad
    \di \xi = 0 \, ; \qquad t^s_{kl} = f_{,kl}-\frac13\Delta f
\end{equation}
for some function $f$ and a covector $\xi$.
For field configuration having compact boundary in $V$ (more generally: for fields fulfilling appropriate boundary conditions on $\partial V$), the decomposition is {\em unique} and the three components: $t^t$, $t^v$ and $t^s$ are mutually orthogonal with respect to the $L^2$-scalar product: $(t|s) = \int_V t\cdot s$.

From \eqref{divDs}-\eqref{divBs} and \eqref{D=Es}-\eqref{H=Bs} we have
\begin{equation}
 D^t =  {D} =  {E} = {E}^t \, , \qquad
   {H}^t =  {H} = {B} = {B}^t \, .
\end{equation}
By taking transverse-traceless part of equations \eqref{potP} and \eqref{potS}, we have that $P^t$ and $S^t$ fulfill wave equations. So, if we define $h$ as a tensor, such that
\be
\curl  h = S
\ee
than $h$ fulfills $\square h = 0$, too. (Existence and uniqueness of such $h$ is proved in Appendix A.) This equation is obviously equivalent to $\ddot h = -\curl\curl h$.

Now, we reduce expression~\eqref{presymps}, integrating by parts and using orthogonality relations:
\begin{align*}
     \delta L &= - \int_V \partial_0\left( {D} \delta  {P} + {H} \delta  {S} \right) =  -\int_V ({D} \delta  \dot {P} +\dot {{D}} \delta P + {H} \delta  \dot {S} + \dot {{H}} \delta {S}) \\
     &= -\int_V ({D} \delta  \dot {P}^t +\dot {{D}} \delta P^t + {H} \delta  \dot {S}^t + \dot {{H}} \delta {S}^t) \\ &= -\int_V ({D} \delta  \dot {P}^t +\dot {{D}} \delta P^t + \curl{H} \delta  \dot h - D \delta \curl{S}^t)\\ & = -\int_V ({D} \delta  \dot {P}^t +\dot {{D}} \delta P^t + \dot{D} \delta  \dot h - D \delta \curl\curl h) \\ &= -\int_V \left({D} \delta ( \dot {P}^t + \ddot h) +\dot {{D}} \delta (P^t +  \dot h)\right) =  -\int_V\left({D} \delta\dot{p} +\dot {{D}} \delta p\right) =-\int_V \partial_0 \left({D} \delta{p} \right) \,,
\end{align*}
where we denoted $p := P^t + \dot h$.
Hence, our symplectic structure $(P,S,D, H)$ with a symplectic form $\omega = \delta P\wedge\delta D+ \delta S\wedge\delta H$, became reduced to $(p, D)$ with a symplectic form $\widetilde\omega = \delta p\wedge\delta D$, derived in Section \ref{simple} from our naive variational principle (cf. \eqref{Omega_pD}).

\subsection{Field energy in the Fierz-Lanczos theory}

In this formulation the transition to the Hamiltonian picture is straightforward and gives results identical with the ones obtained in Section \ref{simple}. If $p =  P^t + \dot h$ is the configuration field, and $- D$ its canonical momentum then the Legendre transformation reads:
\begin{align*}
H &= -L -  D\cdot\dot{p} = -\frac12(E^2-B^2) - D\cdot(\dot{ P}^t + \, \ddot {h}) \\ &= \frac12(B^2-D^2) -  D\cdot(\dot{{ P}}^t - \curl\curl  h) =  \frac12(B^2-D^2) +  D(-\dot{{ P}}^t + \curl S^t) \\ & = \frac12(B^2-D^2) + D\cdot  E^t
 = \frac12(D^2+B^2)\;,
\end{align*}
where we have used the tensor part of the first equation in \eqref{weylodA}: $ \curl S^t -\dot{{P}}^t = E^t$.

\subsection{Poynting vector and energy flux in Fierz-Lanczos theory}

Similarly as in electrodynamics, the energy flux can also be localized. For this purpose we define the Poynting vector:
\begin{equation}\label{Poynting}
    {\cal S}^k = ( E \, \times \, B )^k := \epsilon^{klm} E_{li} B_m{^i} \, ,
\end{equation}
fulfilling the following identity:
\begin{eqnarray*}
  {\rm div} {\cal S} &=& \partial_k \left( \epsilon^{klm} E_{li} B_m{^i} \right) =
   \left( \epsilon^{klm} \partial_k E_{li} \right) B_m^{^i} +
     E_{li}  \left( \epsilon^{klm} \partial_k B_m^{^i} \right)
  \\
    &=&  \left(  \curl E \middle| B \right) - \left( E \middle| \curl B \right) =
    - \left(  \dot{B} \middle| B \right) - \left( E \middle| \dot{E} \right) =
    -  \partial_0  \left(\frac{E^2 + B^2}2\right) = - \dot{\cal H} \, ,
\end{eqnarray*}
equivalent to the  continuity equation:
\begin{equation}\label{cont}
    {\rm div} {\cal S} + \dot{\cal H} =0  \, .
\end{equation}
Integrating over any volume $V$, we obtain
\begin{equation}\label{fluxV}
    \dot{\cal H}_V = \frac {\rm d}{{\rm d}t} \int_V {\cal H} = - \int_{\partial V} {\cal S}^\perp \, .
\end{equation}
Hence, we are able to control the energy transfer through each portion of the boundary $\partial V$.

\section{Conclusions}
In this paper we were able to calculate the amount of energy $E_V$ carried by the massless spin-two field and contained within a space region $V \subset \mathbb{R}^3$. For this purpose we have used consequently definition of energy as the Hamiltonian function generating field evolution within $V$. {\em A priori}, evolution within $V$ is not unique because can be arbitrarily influenced by exterior of $V$. To make the system autonomous, we must insulate it adiabatically from this influence: appropriate conditions have to be imposed on the behaviour of the field at the boundary $\partial V$. Mathematically, control of boundary conditions select among possible self-adjoint extensions of the evolution operator (typically: the Laplace operator) a single one which is positive. Moreover, it enables us to organize the phase space of the field Cauchy data into a strong Hilbert-K\"ahler structure, where the ``well-posedness'' of the initial value problem is equivalent to the self-adjointness of the evolution operator. The use of specific representations of the theory (tensorial Fierz-Lanczos {\em versus} spinorial one, symplectic reduction by means of the Straumann decomposition {\em versus} imposing ``Coulomb gauge'' etc.) is irrelevant in this context: two such representations are isomorphic in a strong, functional-analytic sense. This way we have shown that the theory admits the ``local energy density'' $H = \frac{D^2 + B^2}2$ such that
\[
    E_V = \int_V H \, .
\]
Moreover, the flux of energy through boundary can also be localized by means of the Poynting vector \eqref{Poynting}.
We stress that -- contrary to the common belief -- such a local character of the field energy is rather exceptional. In particular, theories of gravitation (both the complete Einstein theory and its linearized  version) do not exhibit any such ``energy density''(or local flux represented by Poynting vector). Nevertheless, in both versions of the theory, energy $E_V$ and its flux can be uniquely defined by our procedure, even if the locality property \eqref{Esum} is not valid. The complete functional-analytic framework of our approach will be presented in the next paper.

\section*{Acknowledgements}

This research was supported in part by Narodowe Centrum Nauki (Poland) under Grant No. 2016/21/B/ST1/00940 and by the Swedish Research Council under grant no. 2016-06596 while JJ was in residence at Institut Mittag-Leffler in Djursholm, Sweden during the Research Program: General Relativity, Geometry and Analysis: beyond the first 100 years after Einstein, 02 September - 13 December 2019.

\appendix
\section{Existence of tensor potential for transverse-traceless tensors}

\begin{Lemma}
Given a symmetric, transverse-traceless field $B$ on a 3D-Euclidean space (i.e.~the Cauchy surface $\{ t = 0 \}$), there is a symmetric, transverse-traceless field $p$ such that
\begin{equation}\label{B=rotp1}
    B = {\rm curl} \ p \, .
\end{equation}
The field $p$ is implied by $B$ up to second derivatives $\partial_i \partial_j \varphi$ of a harmonic function: $\Delta \varphi =0$.
\end{Lemma}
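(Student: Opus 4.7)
The key observation is the identity $\curl\,\curl = -\Delta$ on symmetric transverse-traceless (TT) tensors, already invoked in the Corollary above. This suggests defining
\[
    p := - \Delta^{-1}(\curl B) \, ,
\]
where $\Delta^{-1}$ is understood as convolution with the standard $3D$ Newton kernel, so that $p$ decays at infinity. I would first check that $p$ inherits the symm-TT character of $B$: since $B$ is symm-TT, the calculation in \eqref{div-curl} shows that $\curl B$ is again symm-TT, and the scalar operator $\Delta^{-1}$ preserves each of the three algebraic properties (symmetry, tracelessness, divergence-freeness) separately. Using then that $\Delta$ commutes with $\curl$ in Cartesian coordinates,
\[
    \curl p = - \Delta^{-1}(\curl\,\curl B) = - \Delta^{-1}(-\Delta B) = B \, ,
\]
which gives existence.

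\textbf{Uniqueness.} Suppose $p$ is symm-TT and $\curl p = 0$. By \eqref{div-curl} the antisymmetric part of $\varepsilon_i{}^{kl}\nabla_k p_{lj}$ vanishes on TT fields, so the explicit symmetrization in the definition of $\curl$ is redundant and we obtain the stronger condition $\varepsilon_i{}^{kl}\nabla_k p_{lj} = 0$. Contracting with $\varepsilon^i{}_{mn}$ yields $\nabla_m p_{nj} = \nabla_n p_{mj}$, i.e. for every fixed $j$ the $1$-form $p_{\cdot j}$ is closed. The Poincar\'e lemma gives a covector $q_j$ with $p_{nj} = \partial_n q_j$; symmetry of $p$ then forces $\partial_n q_j = \partial_j q_n$, so $q_j = \partial_j\varphi$ for some scalar $\varphi$, and hence $p_{nj} = \partial_n\partial_j \varphi$. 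Tracelessness becomes $\Delta\varphi = 0$, whereas divergence-freeness is automatic, as claimed.

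\textbf{Main obstacle.} The only delicate point is the functional-analytic setting for $\Delta^{-1}$: one must specify a class of fields on which Newton-kernel convolution converges and preserves the symm-TT algebraic structure. On $\mathbb{R}^3$ this amounts to imposing sufficient fall-off on $B$ so that the convolution is well-defined and the result still decays; on a bounded domain $\Delta^{-1}$ would instead be replaced by solving an appropriate Dirichlet/Neumann problem for the vector Laplacian. These are the issues that the authors defer to their discussion of self-adjoint extensions, and they do not affect the algebraic content of the lemma.
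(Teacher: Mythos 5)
Your proof is correct, but it takes a genuinely different route from the paper's. The paper's Appendix A builds $p$ by successive gauge-fixing of a first-order potential: since each row $B^{\bullet k}$ is divergence-free, one solves $\epsilon^{lij}\partial_i a_j^{\ k}=B^{lk}$ row by row, then uses the residual freedom $a_j^{\ k}\mapsto a_j^{\ k}+\partial_j\phi^k$ to symmetrize the matrix (the integrability condition for this is exactly the tracelessness of $B$), then uses the remaining freedom $\partial_i\partial_j\varphi$ to kill the trace via one scalar Poisson equation, and finally observes that transversality of the result comes for free from the symmetry and tracelessness of $B$; uniqueness is obtained by tracking the residual freedom through these steps. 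You instead invert the Laplacian globally and exploit $\curl\curl=-\Delta$ on TT fields, an identity the paper only deploys later, in the proof of the Corollary. What each approach buys: your existence step is shorter, and your uniqueness argument --- contracting $\varepsilon_i^{\ kl}\nabla_k p_{lj}=0$ with $\varepsilon^{i}{}_{mn}$ and applying the Poincar\'e lemma twice --- is cleaner and more self-contained than the paper's; on the other hand, you need decay of $B$ sufficient for the Newton-kernel convolution (and for $\Delta^{-1}\Delta=\mathrm{id}$), whereas the paper needs only curl-inversions plus a single scalar Poisson equation, which adapts more easily to bounded domains. Two small points to tighten: equation \eqref{div-curl} establishes only the \emph{symmetry} of $\varepsilon_i^{\ kl}\nabla_k B_{lj}$, not the full symm-TT character of $\curl B$; tracelessness follows from the symmetry of $B$, and transversality requires the separate (easy) observation that $\nabla^j(\curl B)_{ij}$ reduces to terms of the form $\varepsilon^{jkl}\nabla_j\nabla_k B_{li}$ and $\varepsilon_i^{\ kl}\nabla_k(\di B)_l$, both of which vanish. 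Also, in your uniqueness step the Poincar\'e lemma is used on the Cauchy surface, so simple connectivity (which holds for $\mathbb{R}^3$, the case at hand) should be stated as a hypothesis if one wants the lemma on a general domain. Neither point affects the validity of your argument.
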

\begin{proof}
Since for every $k=1,2,3$ the vector $B^{\bullet k}$ is divergence-free, we can solve equation ${\rm curl}\  a^{\bullet k} = B^{\bullet k}$. This means that there is a matrix $a_{ij}$ satisfying equation:
\begin{equation}\label{rota=B}
       \epsilon^{lij}\partial_i a_j^{\ k} = B^{lk} \, .
\end{equation}
Each solution is given uniquely up to a gradient. This means that for any triple $\phi^k$ of functions, the matrix
\[
    \widetilde{a}_j^{\ k} := a_j^{\ k} + \partial_j \phi^k \, ,
\]
is also  a solution of \eqref{rota=B}. To make the matrix $\widetilde{a}$ symmetric, we must fulfill three equations:
\begin{equation}\label{symm}
   0 = \epsilon^{njk} \widetilde{a}_{jk} = \epsilon^{njk} \left( {a}_{jk} + \partial_j \phi_k \right) \, ,
\end{equation}
or, equivalently
\begin{equation}\label{PhiPsi}
    {\rm curl} \, \vec{\phi} = \vec{\psi}  \, ,
\end{equation}
where we have defined vector fields $\vec{\phi} = \left( \phi^k \right)$ and $\vec{\psi} = \left( \psi^k \right)$, where $\psi^n := - \epsilon^{njk}  {a}_{jk}$.
A sufficient condition for the solvability is: ${\rm div}\, \psi = 0$. But, due to \eqref{rota=B}, we have:
\begin{equation}\label{divPsi}
    - {\rm div}\, \vec{\psi} = \partial_n \epsilon^{njk}  {a}_{jk} = \epsilon^{knj} \partial_n {a}_{jk} = B^k_{\ k} = 0 \, ,
\end{equation}
and, whence, the condition is fulfilled and the solution of \eqref{PhiPsi} is given uniquely, up to a gradient of a function, say $\varphi$. This means that $\phi_k$ is given uniquely up to $\partial_k \varphi$. We conclude that there is a solution of \eqref{rota=B} which is symmetric. It is given up to $\partial_j \partial_k \varphi$. This non-uniqueness can be used to make the solution traceless. For this purpose we put
\begin{equation}\label{p=a+ddfi}
    p_{ij} = \widetilde{a}_{ij} + \partial_i \partial_j \varphi \, ,
\end{equation}
and impose condition
\begin{equation}\label{tracep}
    0 = p_i^{\ i} = \widetilde{a}_i^{\ i} + \Delta \varphi \, ,
\end{equation}
which we solve for $\varphi$. This way we have $p$ which is another solution of \eqref{rota=B} and is: 1) symmetric and 2) traceless. But, it is also divergence-free because of the following identity:
\begin{eqnarray*}
  0= \epsilon_{nlk} B^{lk} &=& \epsilon_{nlk} \epsilon^{lij}\partial_i a_j^{\ k} = \left(\delta^i_k \delta^j_n - \delta^i_n \delta^j_k
  \right) \partial_i a_j^{\ k} = \partial_k a_n^{\ k} - \partial_n a_k^{\ k}\\
   &=&  \partial_k a_n^{\ k} \, .
\end{eqnarray*}
The Lemma is, therefore, proved and the solution $p_{ij}$ is given up to $\partial_i \partial_j \varphi$, where $\Delta \varphi = 0$.
\end{proof}

\section{Square of the Weyl tensor in (3+1)-decomposition}
Equalities \eqref{E-oraz-B}:
\[
E_{kl} =  w_{0k0l} \, , \quad
 \quad { B}_{ji} = \frac 12 \varepsilon_j^{\ kl}  w^{0}{_{ikl}}
 \]
imply also
\begin{equation}
    w^{0k0l} = E^{kl}\, , \quad w_{0kij} = -B_{kl}\varepsilon^l{_{ij}}\, , \quad w^{0kij} = B^{kl}\varepsilon_l{^{ij}}\, .
\end{equation}
Weyl property: $-\frac14\varepsilon^{\gamma\delta\alpha\beta}w_{\alpha\beta\mu\nu}\varepsilon^{\mu\nu\pi\rho} = w^{\gamma\delta\pi\rho}$ implies
\begin{equation}
w_{ijmn} = -\varepsilon_{ijk}E^{kl}\varepsilon_{mnl}.
\end{equation}
Finally, we obtain
\begin{align*}
  w_{\alpha\beta\mu\nu}w^{\alpha\beta\mu\nu} &= 4w_{0k0l}w^{0k0l} + 2w_{0kij}w^{0kij} + 2w_{ij0k}w^{ij0k} + w_{ijkl}w^{ijkl}=\\ &= 4E_{kl}E^{kl} -4\varepsilon^l{_{ij}}B_{kl}\varepsilon_m{^{ij}}B^{km} + \varepsilon_{ijm}E^{mn}\varepsilon_{kln}\varepsilon^{ija}E_{ab}\varepsilon^{klb} = \\
  &= 4E_{kl}E^{kl} -8B_{kl}B^{kl}+4E_{mn}E^{mn} = 8(E^2-B^2).
\end{align*}

\section{(3+1)-decomposition of the Lanczos potential}
If we define
\begin{eqnarray}
  P_{kl} &=& -A_{0(kl)} \\
  S_{kl} &=& -\frac12A_{ij(k}\varepsilon^{ij}{_{l)}} \\
  a_i &=& -A{_{0i0}}  \\
  b^i &=& -\frac 12 \varepsilon^{ikl} A_{kl0}   \ \  \Leftrightarrow  \ \   A_{ij0}=-b^m\varepsilon_{mij} \, ,
\end{eqnarray}
then we obtain
\begin{align}
    A_{0kl} &= A_{0(kl)} + A_{0[kl]} = -P_{kl} + \frac12(A_{0kl}-A_{0lk}) = \\
    &=-P_{kl} + \frac12(A_{0kl}+A_{lk0} + A_{k0l}) = -2_{kl} + \frac12A_{lk0} = -P_{kl} + \frac12b^m\varepsilon_{mkl} \, .
\end{align}
Tensor $A_{ij[k}\varepsilon_{l]}{^{ij}}$ is antisymmetric, so there exists a vector $c^m$ such that
\[
A_{ij[k}\varepsilon_{l]}{^{ij}} = c^m\varepsilon_{mkl} \, .
\]
Multiplying this equation by $\varepsilon^{klm}$, we have
\[
A_{ijk}(\eta^{im}\eta^{jk} - \eta^{ik}\eta^{jm}) =  2c^m,
\]
so
\[
c^m  = \frac12(A^{mj}{_j} - A^{jm}{_j}) = -A^{jm}{_j} = A^{0m}{_0} = -A_0{^m}{_0} = a^m.
\]
Now we decompose tensor $A_{ijk}\varepsilon^{ij}{_{l}}$ onto symmetric and antisymmetric part:
\begin{align}
    A_{ijk}\varepsilon^{ij}{_{l}} &= A_{ij(k}\varepsilon^{ij}{_{l)}}+A_{ij[k}\varepsilon^{ij}{_{l]}} = -2S_{kl} + a^j\varepsilon_{jkl}.
\end{align}
Multiplying this equality by $\varepsilon^l{_{mn}}$ leads to following result:
\begin{equation}
2A_{mnk} = -2S_{kl}\varepsilon^l{_{mn}}+a_{[m}\eta_{n]k}.
\end{equation}
Now, using \eqref{wodA}, we can express $E$ and $B$ in terms of $P$, $S$, $a$ and $b$:
\begin{align*}
    E_{kl} &= w_{0k0l} = A_{0kl;0} - A_{0k0;l}+A_{l00;k} - A_{l0k;0} - \left( A^i{_{00;i}}\eta_{kl} + A^0{_{(kl);0}}\eta_{00} + A^i{_{(kl);i}}\eta_{00}\right) \\&= -2\dot P_{kl} + 2a_{(k;l)} - a^i{_i}\eta_{kl} + \dot P_{kl} -  \varepsilon^{ji}{_{(k}}S_{l)j;i} + \frac12\left(a^i{_{;i}}\eta_{kl} - a_{(k;l)}\right) \\
    &= -\dot P_{kl} + (\curl S)_{kl} + \frac32a_{(k;l)} - \frac12a^i{_{;i}}\eta_{kl} \, ,
\end{align*}
\begin{align*}
    B_{kl} &= \frac12\varepsilon^{ij}{_l}w_{k0ij} = \frac12\varepsilon^{ij}{_l}\left(A_{k0j;i} - A_{k0i;j} + A_{ij0;k} - A_{ijk;0} - A^0{_{(0j);0}}\eta_{ki} - A^m{_{(0j);m}}\eta_{ki}  \right. \\
    &+ \left. A^0{_{(0i);0}}\eta_{kj} + A^m{_{(0i);m}}\eta_{kj}\right)  \\
    &= -\varepsilon^{ij}{_l}A_{0kj;i} - b_{l;k} + \dot S_{kl} - \frac12\varepsilon_{mkl}\dot a^m + \frac12\varepsilon^{i}{_{kl}}\dot a_i -\frac12\varepsilon^{i}{_{kl}} A^m{_{0i;m}} -\frac12\varepsilon^{i}{_{kl}} A^m{_{i0;m}}  \\
    &= \varepsilon^{ij}{_l}P_{kj;i} -\frac12 \varepsilon^{ij}{_l}\varepsilon_{kjm}b^m{_{;i}} - b_{l;k} + \dot S_{kl} + \varepsilon^{i}{_{kl}} P^m{_{i;m}} + \frac14\varepsilon^{i}{_{kl}}\varepsilon^{nm}{_i}b_{n;m} - \frac12\varepsilon^{i}{_{kl}}\varepsilon^{nm}{_i}b_{n;m} \\
    &= (\curl P)_{kl} -\varepsilon^i{_{kl}}P^m{_{i;m}}-\frac12b_{k;l} + \frac12b^i{_{;i}}\eta_{kl} - b_{l;k} + \dot S_{kl} + \varepsilon^{i}{_{kl}} P^m{_{i;m}} - \frac14b_{k;l} + \frac14b_{l;k} \\
    &= \dot S_{kl} + (\curl P)_{kl} - \frac32b_{(k;l)} + \frac12b^i{_{;i}}\eta_{kl} \, .
\end{align*}

\end{document}